\setlist{nolistsep}
\newtheorem{lemma}{Lemma}
\newtheorem{theorem}{Theorem}
\newtheorem{corollary}{Corollary}
\newtheorem{definition}{Definition}
\newcounter{ALC@tempcntr}
\newcommand{\hide}[1]{}
\newcommand{\ben}{\begin{enumerate*}}
\newcommand{\een}{\end{enumerate*}}
\newcommand{\bit}{\begin{itemize*}}
\newcommand{\eit}{\end{itemize*}}
\begin{document}



\author{
  Michelle L. Wu\textsuperscript{1,2}\thanks{
    ORCID: \href{https://orcid.org/0000-0002-1881-1368}{0000-0002-1881-1368} \\
    \textsuperscript{1}NEXQ Inc., \texttt{hq@nexq.us} \\
    \textsuperscript{2}Carnegie Mellon University, Tepper School of Business, \texttt{mlwu@andrew.cmu.edu} \\
  }
}


\title{A Grover-Based Quantum Algorithm for Solving Perfect Mazes via Fitness-Guided Search}
\maketitle
%
\begin{abstract}
We present a quantum algorithm for solving perfect mazes by casting the pathfinding task as a structured search problem. Building on Grover's amplitude amplification, the algorithm encodes all candidate paths in superposition and evaluates their proximity to the goal using a reversible fitness operator based on quantum arithmetic. A Grover-compatible oracle marks high-fitness states, and an adaptive cutoff strategy refines the search iteratively. We provide formal definitions, unitary constructions, and convergence guarantees, along with a resource analysis showing efficient scaling with maze size and path length. The framework serves as a foundation for quantum-hybrid pathfinding and planning. The full algorithmic pipeline is specified from encoding to amplification, including oracle design and fitness evaluation. The approach is readily extensible to other search domains, including navigation over tree-like or acyclic graphs.
\end{abstract}

\section{Introduction} \label{sec:intro}

Maze solving represents a fundamental computational problem with applications spanning robotics, game theory, and artificial intelligence. In its general form, the task involves identifying a valid path from a designated start cell to a goal cell within a discretized spatial environment. When constrained to perfect two-dimensional mazes, defined as acyclic, fully connected grids, the problem remains computationally intractable and belongs to the class of NP-complete problems. Classical approaches such as backtracking and breadth-first search exhibit exponential scaling behavior with respect to maze size and branching factor, limiting their practical applicability in large instances.

Quantum computing offers an alternative paradigm by leveraging principles of superposition and interference to achieve algorithmic speedups. In particular, Grover's search algorithm provides a quadratic improvement for unstructured search problems and serves as a foundation for formulating structured search tasks in the quantum domain.

This work presents a complete quantum algorithm for solving perfect mazes by framing the path-finding problem as a structured quantum search. The proposed framework incorporates the following key components:

\begin{itemize}
    \item Quantum encoding of all candidate paths in a uniform superposition;
    \item A reversible fitness evaluation operator that scores paths based on proximity to the goal state;
    \item A Grover-compatible oracle that selectively amplifies high-fitness paths;
    \item An iterative amplitude amplification scheme that progressively refines the solution space.
\end{itemize}

While previous studies have explored quantum approaches to maze solving at a conceptual level, such as explicit circuit constructions for fitness evaluation and oracle design, these have not been rigorously defined. This work addresses these gaps by providing formal mathematical definitions, circuit-level implementations, and a complete convergence analysis of the quantum search process. Additionally, we evaluate the time complexity and resource requirements of the algorithm, highlighting its feasibility within the context of near-term quantum architectures.

\section{Problem Formalization} \label{sec:problem}

Let the maze be modeled as a perfect $m \times m$ grid, i.e., a two-dimensional discrete space:
\begin{equation}
    \mathcal{M} := \{0, 1, \dots, m-1\} \times \{0, 1, \dots, m-1\}
\end{equation}
Each cell $(i, j) \in \mathcal{M}$ may have open walls in one or more of the four cardinal directions: $\{N, E, S, W\}$. The maze is:
\begin{itemize}
    \item \textit{Solvable:} there exists at least one path from a unique start cell $(i_s, j_s)$ to the goal cell $(i_f, j_f)$,
    \item \textit{Loop-free:} the maze is a tree over $\mathcal{M}$, i.e., there is exactly one simple path between any two reachable cells.
\end{itemize}

\subsection*{Path Representation}

A \textit{path} of length $n$ is a sequence of $n$ directional moves:
\[
P = (d_1, d_2, \dots, d_n), \quad \text{where } \quad d_k \in \{N, E, S, W\}
\]
Each direction can be encoded as a 2-bit binary:
\[
N = 00,\quad E = 01,\quad S = 10,\quad W = 11
\]
Thus, a path of length $n$ corresponds to a $2n$-bit string, or equivalently, a quantum state on $2n$ qubits:
\begin{equation}
    \ket{P} = \ket{d_1} \otimes \ket{d_2} \otimes \cdots \otimes \ket{d_n} \in \left( \mathbb{C}^2 \right)^{\otimes 2n}
\end{equation}

\subsection*{Search Space}

The full set of candidate paths of length $n$ is:
\[
\mathcal{P}_n := \{N, E, S, W\}^n, \quad \text{so that } \quad |\mathcal{P}_n| = 4^n
\]
Each $P \in \mathcal{P}_n$ may or may not result in a valid traversal through the maze from the start cell. We define a transition function that simulates each path.

\subsection*{Maze Transition Function}

Let $\delta : \mathcal{M} \times \{N,E,S,W\} \to \mathcal{M} \cup \{\bot\}$ be the partial transition function:
\begin{equation}
    \delta\big( (i,j), d \big) =
\begin{cases}
(i', j') & \text{if move in $d$ from $(i,j)$ is legal} \\
\bot & \text{if move hits a wall or exits the grid}
\end{cases}
\end{equation}
Then, a path $P = (d_1, \dots, d_n)$ induces a sequence of positions:
\[
(i_0, j_0) := (i_s, j_s), \quad (i_k, j_k) := \delta\big( (i_{k-1}, j_{k-1}), d_k \big)
\]
If at any step $k$, we have $(i_k, j_k) = \bot$, the path is considered invalid past that point.

\subsection*{Goal Definition}

Let the final position of path $P$ be:
\begin{equation}
    \text{end}(P) := (i_n, j_n)
\end{equation}
where $(i_n, j_n)$ is computed by sequential application of $\delta$. If $\delta$ fails at step $k < n$, then $\text{end}(P)$ is the last valid cell prior to $\bot$.

The optimization objective is to find a path $P^\star \in \mathcal{P}_n$ that:
\begin{enumerate}
    \item Exactly reaches the goal: $\text{end}(P^\star) = (i_f, j_f)$, or
    \item Minimizes a known metric (e.g., Manhattan distance) to the goal:
    \begin{equation}
        P^\star := \arg\min_{P \in \mathcal{P}_n} \text{dist}(\text{end}(P), (i_f, j_f))
    \end{equation}
\end{enumerate}
This is formalized in the fitness function described in Section~\ref{sec:fitness}.

\subsection*{Path Length Parameter}

The choice of $n$ is crucial:
\begin{itemize}
    \item If $n < \text{length}(P^*)$, the solution space excludes the true path.
    \item If $n > \text{length}(P^*)$, the space includes redundant or invalid paths.
\end{itemize}
We assume $n \geq \ell_{\min}$, where $\ell_{\min}$ is the unique path length from $(i_s, j_s)$ to $(i_f, j_f)$. This value may be set heuristically or adapted in an iterative process (see Section~\ref{sec: iterative}). We note that because the maze is loop-free, it induces a spanning tree over $\mathcal{M}$ for the tree-based maze representation. Thus, the set of all valid paths corresponds to root-to-leaf paths in this tree. This structure is beneficial for pruning during classical or quantum search.

\section{Quantum Encoding and Superposition} \label{sec:encoding}

Let a path $P = (d_1, d_2, \dots, d_n) \in \mathcal{P}_n$ be a sequence of $n$ moves, where each $d_k \in \{N, E, S, W\}$. We define a binary encoding map:
\begin{equation}
    \chi: \{N, E, S, W\} \to \{0,1\}^2
\quad \text{such that} \quad
\begin{aligned}
\chi(N) &= 00 \\
\chi(E) &= 01 \\
\chi(S) &= 10 \\
\chi(W) &= 11
\end{aligned}
\end{equation}
Then, a path $P \in \mathcal{P}_n$ is mapped to a $2n$-bit binary string:
\begin{equation}
    \chi(P) = \chi(d_1) \, \| \, \chi(d_2) \, \| \, \cdots \, \| \, \chi(d_n) \in \{0,1\}^{2n}
\end{equation}

\subsection*{Quantum Register Representation}

Each bit is associated with a qubit, and the full quantum register lives in the Hilbert space:
\begin{equation}
    \mathcal{H} := \left( \mathbb{C}^2 \right)^{\otimes 2n}, \quad \dim \mathcal{H} = 2^{2n}
\end{equation}
For each path $P \in \mathcal{P}_n$, define:
\begin{equation}
    \ket{P} := \ket{\chi(P)} \in \mathcal{H}
\end{equation}
For example, if $P = (N, E, S)$, then:
\[
\ket{P} = \ket{00\,01\,10} = \ket{0} \otimes \ket{0} \otimes \ket{0} \otimes \ket{1} \otimes \ket{1} \otimes \ket{0}
\]

\subsection*{Uniform Superposition Over Paths}

We aim to construct the state:
\begin{equation}
    \ket{\Omega} := \frac{1}{\sqrt{|\mathcal{P}_n|}} \sum_{P \in \mathcal{P}_n} \ket{P}
= \frac{1}{2^n} \sum_{P \in \mathcal{P}_n} \ket{\chi(P)}
\end{equation}
This is a uniform superposition over all encoded direction sequences of length $n$.

\subsection*{Hadamard Preparation and Encoding Validity}

Let the initial state be:
\begin{equation}
    \ket{0}^{\otimes 2n} \in \mathcal{H}
\end{equation}
Apply the Hadamard gate to each qubit:
\begin{equation}
    H^{\otimes 2n} \ket{0}^{\otimes 2n} = \frac{1}{\sqrt{2^{2n}}} \sum_{z \in \{0,1\}^{2n}} \ket{z}
\end{equation}
This yields a uniform superposition over all binary strings of length $2n$. However, not all such strings correspond to valid direction sequences. In fact, the subset:
\begin{equation}
    S := \chi(\mathcal{P}_n) \subset \{0,1\}^{2n}
\end{equation}
has size exactly $|S| = 4^n$ — corresponding to strings where every adjacent pair of bits encodes a direction.

To restrict the uniform superposition to valid encodings, we group qubits into $n$ pairs:
\begin{equation}
    \mathcal{H} = \bigotimes_{k=1}^n \mathcal{H}_k, \quad \text{where } \mathcal{H}_k := \mathbb{C}^2 \otimes \mathbb{C}^2
\end{equation}
On each 2-qubit block, apply:
\begin{equation}
    H^{\otimes 2} \ket{00} = \frac{1}{2} \sum_{b \in \{0,1\}^2} \ket{b}
\end{equation}
Then the full state becomes:
\begin{equation}
    \ket{\Omega} = \bigotimes_{k=1}^n \left( \frac{1}{2} \sum_{b_k \in \{0,1\}^2} \ket{b_k} \right)
= \frac{1}{2^n} \sum_{P \in \mathcal{P}_n} \ket{\chi(P)}
\end{equation}

\begin{theorem}[Superposition Correctness]
Let $n \in \mathbb{N}$. Let $\mathcal{P}_n = \{N, E, S, W\}^n$ be the set of all direction sequences of length $n$, and define the encoding map $\chi: \mathcal{P}_n \to \{0,1\}^{2n}$ as above.

Let $\ket{\Omega} \in \left( \mathbb{C}^2 \right)^{\otimes 2n}$ be the quantum state defined by:
\begin{equation}
    \ket{\Omega} := \bigotimes_{k=1}^n \left( H^{\otimes 2} \ket{00} \right)
\end{equation}
Then:
\begin{equation}
    \ket{\Omega} = \frac{1}{2^n} \sum_{P \in \mathcal{P}_n} \ket{\chi(P)}
\end{equation}
That is, $\ket{\Omega}$ is a uniform superposition over all valid direction sequences of length $n$.
\end{theorem}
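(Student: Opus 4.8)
The plan is to establish the identity by a direct computation that factors through the tensor-product structure and then reindexes the resulting sum via the bijectivity of $\chi$. First I would verify the single-block identity: applying $H$ to each of the two qubits in a block and using $H\ket{0} = \tfrac{1}{\sqrt{2}}(\ket{0}+\ket{1})$ gives
\[
H^{\otimes 2}\ket{00} = \tfrac{1}{2}\big(\ket{00}+\ket{01}+\ket{10}+\ket{11}\big) = \tfrac{1}{2}\sum_{b\in\{0,1\}^2}\ket{b}.
\]
This is the only analytic ingredient, and it follows immediately from the definition of the Hadamard gate together with the fact that $\otimes$ distributes over the two-term sum from each single-qubit Hadamard.

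Next I would expand the full tensor product $\bigotimes_{k=1}^n \left(\tfrac{1}{2}\sum_{b_k\in\{0,1\}^2}\ket{b_k}\right)$ using the multilinearity of $\otimes$, which turns the product of $n$ sums into a single sum of tensor products, collecting the $n$ scalar factors of $\tfrac{1}{2}$ into an overall $1/2^n$:
\[
\ket{\Omega} = \frac{1}{2^n}\sum_{(b_1,\dots,b_n)\in(\{0,1\}^2)^n}\ket{b_1}\otimes\cdots\otimes\ket{b_n}.
\]
The final step is the reindexing. Since $\chi$ restricted to a single move is a bijection $\{N,E,S,W\}\to\{0,1\}^2$, the induced product map $(d_1,\dots,d_n)\mapsto(\chi(d_1),\dots,\chi(d_n))$ is a bijection $\mathcal{P}_n\to(\{0,1\}^2)^n$. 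I would substitute $b_k=\chi(d_k)$, so that $\ket{b_1}\otimes\cdots\otimes\ket{b_n}=\ket{\chi(d_1)}\otimes\cdots\otimes\ket{\chi(d_n)}=\ket{\chi(P)}$ for $P=(d_1,\dots,d_n)$, which recovers $\ket{\Omega}=\tfrac{1}{2^n}\sum_{P\in\mathcal{P}_n}\ket{\chi(P)}$ exactly.

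There is no genuine analytic obstacle here; the content is essentially careful bookkeeping. The one point requiring attention is justifying the reindexing rigorously, namely that summing over all binary strings in $(\{0,1\}^2)^n$ coincides with summing over all encoded paths, with no string omitted and none double-counted. This is guaranteed precisely because $\chi$ is a bijection onto $\{0,1\}^2$: every $2n$-bit string appearing in the expanded sum decomposes uniquely into $n$ blocks, each of which is the image under $\chi$ of exactly one direction, so each $\ket{\chi(P)}$ occurs once and with the correct amplitude $1/2^n$.
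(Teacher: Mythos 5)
Your proposal is correct and follows essentially the same route as the paper's own proof: compute $H^{\otimes 2}\ket{00}$ on each two-qubit block, expand the $n$-fold tensor product by multilinearity to obtain the $1/2^n$ amplitude, and reindex the sum over $(\{0,1\}^2)^n$ as a sum over $\mathcal{P}_n$ via $\chi$. Your explicit appeal to the bijectivity of $\chi$ to rule out omissions and double-counting is a slightly more careful statement of the reindexing step than the paper's, but it is the same argument.
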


\begin{proof}
Let $\mathcal{H} := \left( \mathbb{C}^2 \right)^{\otimes 2n}$, and consider its decomposition as:
\begin{equation}
    \mathcal{H} = \bigotimes_{k=1}^n \mathcal{H}_k, \quad \text{where } \mathcal{H}_k = \mathbb{C}^2 \otimes \mathbb{C}^2
\end{equation}
Each 2-qubit subsystem $\mathcal{H}_k$ will encode one direction $d_k \in \{N,E,S,W\}$ via the injective map $\chi$.

Now apply $H^{\otimes 2}$ to each pair of qubits initialized in $\ket{00}$:
\begin{equation}
    H^{\otimes 2} \ket{00} = \frac{1}{2} \sum_{b \in \{0,1\}^2} \ket{b} = \frac{1}{2} (\ket{00} + \ket{01} + \ket{10} + \ket{11})
\end{equation}
This forms a uniform superposition over the four valid direction encodings. Since all $n$ direction encodings are independent, we take the tensor product:
\begin{align}
    \ket{\Omega} &= \bigotimes_{k=1}^n \left( \frac{1}{2} \sum_{b_k \in \{0,1\}^2} \ket{b_k} \right)\\&= \frac{1}{2^n} \sum_{(b_1, \dots, b_n) \in (\{0,1\}^2)^n} \ket{b_1 \| b_2 \| \cdots \| b_n}
\end{align}

Each concatenated string $b_1 \| \cdots \| b_n \in \{0,1\}^{2n}$ is equal to $\chi(P)$ for some $P \in \mathcal{P}_n$ by definition of $\chi$. The sum therefore runs over all such encodings:
\begin{equation}
    \ket{\Omega} = \frac{1}{2^n} \sum_{P \in \mathcal{P}_n} \ket{\chi(P)}
\end{equation}

Thus, the proof that $\ket{\Omega}$ is a uniform superposition over all direction sequences of length $n$.
\end{proof}

\section{Fitness Operator Construction} \label{sec:fitness}

The fitness function evaluates how close a given path $P \in \mathcal{P}_n$ ends to the maze goal cell $(i_f, j_f)$ starting from $(i_s, j_s)$. Each direction $d_k \in \mathcal{D} = \{N, E, S, W\}$ is encoded using 2 qubits:
\[
N = 00, \quad E = 01, \quad S = 10, \quad W = 11.
\]
A path $P$ of length $n$ corresponds to a $2n$-qubit register $\ket{x} = \ket{\chi(P)}$.

We define the fitness of $P$ using squared Euclidean distance:
\begin{equation}
    \text{fitness}(P) = C - \left[(i_{\text{end}} - i_f)^2 + (j_{\text{end}} - j_f)^2\right]
\end{equation}
Here, $C = 2^{r}$ is a power-of-two constant satisfying $C \geq 2(m - 1)^2$, ensuring that fitness is always non-negative and maximized when the path reaches the goal. The smallest such $r \in \mathbb{N}$ determines the number of qubits required to store the fitness value.

\subsection*{Definition: Fitness Operator}

\begin{definition}[Fitness Operator]
Let $\ket{x}$ be the quantum register encoding path $P \in \mathcal{P}_n$, and let $\ket{0}_f$ be an ancillary fitness register of $r$ qubits. The fitness operator is the unitary transformation:
\begin{equation}
    F: \ket{x} \ket{0}_f \mapsto \ket{x} \ket{\text{fitness}(x)}
\end{equation}
where the fitness is computed via a reversible simulation of the path defined by $\ket{x}$ and stored in $\ket{f}$.
\end{definition}

\subsection*{Quantum Circuit Implementation}

Let us denote the full register configuration:

\begin{itemize}
    \item $\ket{x}$: $2n$-qubit path register
    \item $\ket{i}, \ket{j}$: Position registers, each of $\lceil \log_2 m \rceil$ qubits
    \item $\ket{d}$: Distance register (stores squared Euclidean distance)
    \item $\ket{f}$: Fitness register of $r = \lceil \log_2 C \rceil$ qubits, where $C \geq 2(m - 1)^2$
    \item Ancillae: Temporary registers for subtraction, squaring, and cleanup
\end{itemize}

The circuit for $F$ consists of the following stages.

\subsubsection*{Path Simulation}

For each $k = 1$ to $n$, extract the 2-qubit slice $\ket{d_k} \subset \ket{x}$ and apply a direction-controlled update:
\begin{equation}
    \delta(i,j; d_k) :=
\begin{cases}
i \mapsto i - 1 & \text{if } d_k = 00 \\
j \mapsto j + 1 & \text{if } d_k = 01 \\
i \mapsto i + 1 & \text{if } d_k = 10 \\
j \mapsto j - 1 & \text{if } d_k = 11
\end{cases}
\end{equation}
Each update is implemented via multi-controlled add/subtract circuits (see \cite{cuccaro2004adder}) using Toffoli and CNOT gates. These operations are fully reversible.

\subsubsection*{Distance Computation}

Let the final position be $(i,j)$. The squared Euclidean distance is:
\[
d = (i - i_f)^2 + (j - j_f)^2
\]
Use quantum subtractors to compute:
\begin{equation}
    \ket{d_i} := \ket{i - i_f}, \quad \ket{d_j} := \ket{j - j_f}
\end{equation}
Then apply reversible squaring circuits (e.g., Toffoli-based multipliers as in \cite{draper2000addition}):
\begin{equation}
    \ket{d_i^2}, \ket{d_j^2} \Rightarrow \ket{d} = \ket{d_i^2 + d_j^2}
\end{equation}

\subsubsection*{Fitness Calculation}

Compute:
\begin{equation}
    \ket{f} = \ket{C - d}
\end{equation}
using a reversible subtractor. The constant $C = 2^{r}$ is embedded as a fixed binary input in the arithmetic circuit. The result is written into $\ket{f}$.

\subsubsection*{Uncomputation}

All ancilla registers used for intermediate subtraction and squaring are uncomputed by reversing the prior operations. This yields:
\begin{equation}
    \ket{x} \ket{f}
\end{equation}
ensuring the operator is reversible and ready for integration into the larger Grover-style quantum circuit.

\subsection*{Theorem: Fitness Operator is Unitary}

\begin{theorem}
The fitness operator $F$ defined above is unitary on the joint space of the path and fitness registers:
\begin{equation}
    F^\dagger F = I
\end{equation}
\end{theorem}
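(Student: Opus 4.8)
The plan is to establish unitarity by showing that $F$ is a composition of unitary operators, since a product of unitaries is itself unitary. Each stage of the circuit—path simulation, distance computation, fitness calculation, and uncomputation—is built entirely from reversible classical building blocks (Toffoli, CNOT, and controlled add/subtract circuits), and every such gate is a permutation on computational basis states, hence unitary. First I would set up the overall structure: write $F = U_{\text{uncomp}} \cdot U_{\text{fit}} \cdot U_{\text{dist}} \cdot U_{\text{sim}}$, and note that it suffices to verify that each factor $U$ satisfies $U^\dagger U = I$, after which $F^\dagger F = U_{\text{sim}}^\dagger \cdots U_{\text{uncomp}}^\dagger U_{\text{uncomp}} \cdots U_{\text{sim}} = I$ follows by telescoping.

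Next I would argue unitarity of each factor at the level of basis-state action. The key observation is that each arithmetic primitive implements a bijection on the relevant basis states: the direction-controlled position updates in $U_{\text{sim}}$ permute $(i,j)$ values reversibly (addition and subtraction modulo the register size are bijective), the quantum subtractors and Toffoli-based squaring circuits in $U_{\text{dist}}$ map basis states to basis states injectively, and the reversible subtractor computing $\ket{C-d}$ in $U_{\text{fit}}$ is likewise a permutation. Since each is a permutation matrix on the computational basis (possibly tensored with identity on untouched registers), each is orthogonal and therefore unitary.

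The cleanest way to present this is to invoke the standard fact that any classical reversible function $g$ extended to a basis-state map $\ket{x} \mapsto \ket{g(x)}$ is unitary precisely when $g$ is a bijection, and to cite the referenced adder/multiplier constructions (\cite{cuccaro2004adder}, \cite{draper2000addition}) as explicitly reversible. I would then address the uncomputation stage carefully: the point of $U_{\text{uncomp}}$ is that it applies the inverses of the ancilla-producing sub-operations, restoring the temporary registers to $\ket{0}$, so that the net map is $\ket{x}\ket{0}_f \mapsto \ket{x}\ket{\text{fitness}(x)}$ with all scratch space disentangled. I would verify that this composite map is injective on its domain—distinct inputs $\ket{x}$ remain distinguishable because $\ket{x}$ is preserved throughout—so $F$ restricts to an isometry on the input subspace and extends to a genuine unitary on the full joint Hilbert space.

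The main obstacle I anticipate is not the algebra but the bookkeeping of ancillae and the domain on which $F$ acts as a bijection. Strictly speaking, $F$ as written is only specified on inputs where the fitness register starts in $\ket{0}_f$ and the scratch ancillae start in $\ket{0}$; on such a subspace it is an isometry, and the claim $F^\dagger F = I$ must be understood as unitarity on the full space obtained by extending this partial isometry (which is always possible for a reversible circuit, since the underlying gates are globally unitary regardless of input). I would therefore be careful to state whether the theorem asserts unitarity of the entire circuit (immediate, as a product of unitary gates) or isometry on the restricted input subspace (also immediate, but requiring the $\ket{0}$-ancilla assumption), and I expect the slight tension between these two readings to be the only subtle point worth flagging.
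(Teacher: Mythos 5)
Your proposal takes essentially the same route as the paper's own proof: $F$ is decomposed into the four reversible stages (path simulation, distance computation, fitness subtraction, uncomputation), each realized by Toffoli/CNOT-based reversible arithmetic and hence unitary, and a composition of unitaries is unitary. Your write-up is in fact more rigorous than the paper's terse argument --- notably the paper never addresses the subtlety you flag, that the map $\ket{x}\ket{0}_f \mapsto \ket{x}\ket{\mathrm{fitness}(x)}$ as stated in the definition is only a partial isometry on the zero-ancilla subspace which extends to a unitary --- but the core reasoning is identical.
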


\begin{proof}
The operator $F$ is composed of the following reversible components:
\begin{itemize}
    \item Path simulation: coordinate updates controlled by path data
    \item Distance computation: reversible arithmetic on position deltas
    \item Fitness evaluation: subtraction from a fixed power-of-two constant
    \item Uncomputation: reverses ancilla-internal operations
\end{itemize}
Each component uses only unitary subcircuits composed of Toffoli, CNOT, and other Clifford+T gates. As the overall transformation is a composition of unitary operations, we conclude $F$ is unitary.
\end{proof}

\subsection*{Worked Example}

Let $m = 2$, $n = 2$, with:
\[
(i_s, j_s) = (0, 0), \quad (i_f, j_f) = (1, 1), \quad \ket{x} = \ket{1001}
\]
Decoded: $S \to i \mapsto i + 1$, $E \to j \mapsto j + 1$. Final cell:
\[
(0, 0) \xrightarrow{S} (1, 0) \xrightarrow{E} (1, 1) \Rightarrow \text{distance} = 0
\]
Let $C = 2^2 = 4$. Then:
\[
\text{fitness} = 4 - 0 = 4
\quad \Rightarrow \quad \ket{f} = \ket{100}
\]
The circuit entangles this output with $\ket{x}$, and all ancilla qubits are reset to zero via uncomputation.

\section{Oracle Design} \label{sec:oracle}

The oracle operator $O$ selectively marks high-fitness paths by applying a phase flip to those quantum states where the fitness exceeds a classical or quantum threshold (called the \textit{cutoff}). It enables Grover-style amplitude amplification by encoding the search predicate into the quantum phase.

Let $\text{cutoff} \in \mathbb{N}$ denote the threshold. The oracle acts on the fitness register $\ket{f_x}$ (of size $m_f$ qubits) as:
\begin{equation}
    O: \ket{x}\ket{f_x} \mapsto (-1)^{f(f_x)} \ket{x} \ket{f_x},
\end{equation}
\begin{equation}
    \quad\text{where}\quad f(f_x) = 
        \begin{cases}
            1 & \text{if } f_x > \text{cutoff} \\
            0 & \text{otherwise}.
        \end{cases}
\end{equation}
\subsection*{Encoding the Cutoff: Classical vs Quantum}

We support two scenarios:
\begin{enumerate}
    \item \textbf{Classical cutoff:} $\text{cutoff}$ is a fixed classical constant, hardcoded into a comparator circuit.
    \item \textbf{Quantum cutoff register:} $\ket{c}$ holds the dynamic threshold, allowing adaptive or iterative schemes.
\end{enumerate}

Both versions can be handled by a reversible comparator, comparing $\ket{f_x}$ to either a classical constant or a quantum register $\ket{c}$.

\subsection*{Oracle Circuit Construction}

The oracle proceeds in 3 reversible stages:

\begin{enumerate}
    \item \textbf{Comparator Stage:}
    \begin{equation}
        \ket{f_x} \ket{0}_b \mapsto \ket{f_x} \ket{b},
    \end{equation}
    where $\ket{b} = 1$ iff $f_x > \text{cutoff}$.

    This uses a reversible greater-than comparator circuit, constructed via bitwise subtraction with carry logic.

    \item \textbf{Phase Flip Stage:}

    Apply a controlled-$Z$ (or $CZ$) gate:
    \begin{equation}
        \ket{x} \ket{f_x} \ket{1}_b \mapsto -\ket{x} \ket{f_x} \ket{1}_b.
    \end{equation}

    This induces a phase flip only on “marked” states. In practice, the $CZ$ acts as identity on all but the $\ket{1}_b$ subspace.

    \item \textbf{Uncomputation Stage:}

    Reverse the comparator logic to clean up ancilla:
    \begin{equation}
        \ket{f_x} \ket{b} \mapsto \ket{f_x} \ket{0},
    \end{equation}
    preserving reversibility and restoring ancilla workspace.
\end{enumerate}

\subsection*{Greater-Than Comparator Circuit}

Let $\ket{f_x} = f_0 f_1 \dots f_{m-1}$ and $\text{cutoff} = c_0 c_1 \dots c_{m-1}$.

To check $f > c$, we compute:
\[
f - c > 0 \quad \Rightarrow \quad \text{MSB of } (f - c) \text{ is } 0.
\]

We implement this with a reversible subtractor circuit, such as Cuccaro’s ripple-carry subtractor:
\begin{equation}
    \ket{f} \ket{c} \ket{0}_b \mapsto \ket{f} \ket{c} \ket{f - c} \xrightarrow{\text{MSB-check}} \ket{f} \ket{c} \ket{b}.
\end{equation}

This is composed of:
\begin{itemize}
    \item $m$ controlled-NOT and Toffoli gates for bitwise difference
    \item A carry-lookahead or ripple-carry network for subtraction
    \item MSB-extract logic using 1 ancilla
\end{itemize}

All gates are reversible and use only Clifford+Toffoli operations.

\subsection*{Unitarity Proof}

Each stage of the oracle is unitary:

\begin{itemize}
    \item The comparator is a reversible function $C$, so $C^\dagger = C^{-1}$ exists.
    \item The phase flip is a diagonal unitary operator: $Z = \text{diag}(1, -1)$.
    \item Full oracle: $O = C^\dagger \cdot Z \cdot C$ is a composition of unitaries $\Rightarrow$ $O$ is unitary.
\end{itemize}

\subsection*{Quantum Resource Cost}

Let $m$ be the number of qubits in $\ket{f_x}$:

\begin{itemize}
    \item Comparator cost: $O(m)$ Toffoli gates, $O(m)$ ancilla for carry bits (can be reused).
    \item Phase flip: 1 $CZ$ gate
    \item Total depth: $O(m)$ for subtraction + $O(m)$ for uncomputation
    \item Total width: $m$ for $\ket{f_x}$, 1 for flag bit, $\sim m$ ancilla
\end{itemize}

Optimizations such as Bennett-style garbage cleanup or ancilla reuse can reduce cost further.

\subsection*{Example}

Let $\ket{f_x} = \ket{1011} = 11$, $\text{cutoff} = 9$. Since $11 > 9$, we flip the phase:
\[
O \ket{x} \ket{f_x} = -\ket{x} \ket{f_x}.
\]

If $\ket{f_x} = \ket{0101} = 5$, then $5 < 9$ so:
\[
O \ket{x} \ket{f_x} = \ket{x} \ket{f_x}.
\]

This oracle is Grover-compatible and supports adaptive cutoff tuning across iterative runs.

\subsection*{Boolean Logic Definition}

Let the fitness register $\ket{f_x}$ contain $m$ qubits representing an unsigned integer $a = a_{m-1}a_{m-2}\dots a_0$, and let the cutoff value $c = c_{m-1}c_{m-2}\dots c_0$ be either classical or stored in a quantum register. The output bit $b$ is defined as:

\begin{equation}
    \texttt{GT}(a, c) = \bigvee_{i=0}^{m-1} \left[ a_i \land \neg c_i \land \bigwedge_{j=i+1}^{m-1} (a_j \leftrightarrow c_j) \right]
\end{equation}

This logic returns $b = 1$ if and only if the unsigned integer $a$ is greater than $c$.

\subsection*{Circuit Realization}

This logic is translated into a reversible circuit using:
\begin{itemize}
    \item \textbf{CNOT} gates to compute XORs $a_j \oplus c_j$,
    \item \textbf{Toffoli} gates to compute controlled conjunctions,
    \item An output ancilla qubit $\ket{b}$ to store the final result.
\end{itemize}

The overall depth is $O(m)$ and width is $O(m)$ ancilla qubits (reusable).

\begin{proof}
Let $a, c \in \{0,1\}^m$ be interpreted as unsigned binary integers.

We aim to show that:
\[
\texttt{GT}(a, c) = 1 \quad \text{iff} \quad a > c.
\]

Let us define the bitwise comparison from most significant bit (MSB) to least significant bit (LSB). A standard lexicographic comparison of $a$ and $c$ proceeds by examining the bits from $j = m-1$ down to $j = 0$. The first position $i$ where $a_i \neq c_i$ determines the result:

\begin{itemize}
    \item If $a_i = 1$ and $c_i = 0$, then $a > c$.
    \item If $a_i = 0$ and $c_i = 1$, then $a < c$.
    \item If $a_i = c_i$, continue checking the next lower bit.
\end{itemize}

The Boolean formula:

\begin{equation}
    \texttt{GT}(a, c) = \bigvee_{i=0}^{m-1} \left[ a_i \land \neg c_i \land \bigwedge_{j=i+1}^{m-1} (a_j \leftrightarrow c_j) \right]
\end{equation}

encodes this logic explicitly. For a particular $i$, it contributes to the OR if:
\begin{itemize}
    \item All more significant bits ($j > i$) match: $a_j = c_j$,
    \item And at bit $i$, we have $a_i = 1$, $c_i = 0$.
\end{itemize}

Thus, the OR expression evaluates to true if and only if the most significant difference between $a$ and $c$ is $a_i = 1$, $c_i = 0$, implying $a > c$. Since each operation in this Boolean formula is reversible when implemented using standard quantum logic gates (CNOTs and Toffolis), the comparator circuit preserves unitarity and can be cleanly uncomputed.

\end{proof}

\section{Grover Iteration and Amplitude Amplification} \label{sec: iterative}

Let $N = 4^n$ be the total number of encoded maze paths of length $n$. Let $\mathcal{H} = \text{span}\{\ket{x} : x \in \mathcal{P}_n\}$ denote the Hilbert space of path states.

Let $M \subseteq \mathcal{P}_n$ be the set of marked states (those with $\text{fitness}(x) > \text{cutoff}$), and let $k = |M|$.

Define:
\begin{equation}
    \ket{\Omega} := \frac{1}{\sqrt{N}} \sum_{x \in \mathcal{P}_n} \ket{x} \quad \text{(uniform superposition)}
\end{equation}

Let:
\begin{equation}
    \ket{\psi_T} := \frac{1}{\sqrt{k}} \sum_{x \in M} \ket{x}, \quad
\ket{\psi_\perp} := \frac{1}{\sqrt{N-k}} \sum_{x \notin M} \ket{x}
\end{equation}
be the normalized projections onto the marked and unmarked subspaces, respectively.

\textbf{Lemma 1:} The states $\{\ket{\psi_T}, \ket{\psi_\perp}\}$ form an orthonormal basis for a 2D subspace $\mathcal{H}_G \subset \mathcal{H}$ that contains all Grover dynamics.

\begin{proof}
    It is clear that both $\ket{\psi_T}$ and $\ket{\psi_\perp}$ are normalized by construction. Their inner product is:
\begin{equation}
    \braket{\psi_T | \psi_\perp} = \frac{1}{\sqrt{k(N-k)}} \sum_{x \in M} \sum_{y \notin M} \braket{x | y} = 0
\end{equation}
since $\braket{x|y} = 0$ for $x \neq y$, and $M \cap (\mathcal{P}_n \setminus M) = \emptyset$. Thus, orthonormality holds.
\end{proof}

\textbf{Lemma 2:} The initial state $\ket{\Omega}$ decomposes as:
\begin{equation}
    \ket{\Omega} = \sin(\theta) \ket{\psi_T} + \cos(\theta) \ket{\psi_\perp}, \quad \theta := \arcsin\left( \sqrt{\frac{k}{N}} \right)
\end{equation}

\begin{proof}
    We compute the projection of $\ket{\Omega}$ onto $\ket{\psi_T}$:

\begin{equation}
    \braket{\psi_T | \Omega}
= \frac{1}{\sqrt{N}} \cdot \frac{1}{\sqrt{k}} \sum_{x \in M} 1
= \frac{\sqrt{k}}{\sqrt{N}} = \sin(\theta)
\end{equation}

By orthogonality, the projection onto $\ket{\psi_\perp}$ is:
\begin{equation}
    \braket{\psi_\perp | \Omega}
= \sqrt{1 - \sin^2(\theta)} = \cos(\theta)
\end{equation}

Thus:
\[
\ket{\Omega} = \sin(\theta) \ket{\psi_T} + \cos(\theta) \ket{\psi_\perp}
\]
\end{proof}

Now, define the Oracle operator:
\begin{equation}
    O := I - 2 \Pi_T \quad\text{where}\quad \Pi_T = \sum_{x \in M} \ket{x} \bra{x}
\end{equation}

and Diffuser operator:

\begin{equation}
    D := 2 \ket{\Omega} \bra{\Omega} - I.
\end{equation}

\textbf{Theorem.} The composite Grover operator \( G := D \cdot O \)  
performs a rotation by angle \( 2\theta \) in the plane spanned by  
\( \{ \ket{\psi_T}, \ket{\psi_\perp} \} \).

\begin{proof}
    Let  
\begin{equation}
    \ket{\Omega} = \sin(\theta) \ket{\psi_T} + \cos(\theta) \ket{\psi_\perp} := \ket{\psi_0}.
\end{equation}

Apply \( O \):
\begin{align}
    O \ket{\psi_0}
= O \left( \sin\theta \ket{\psi_T} + \cos\theta \ket{\psi_\perp} \right)
&= -\sin\theta \ket{\psi_T}\\&+ \cos\theta \ket{\psi_\perp}.
\end{align}

Then apply \( D = 2 \ket{\psi_0} \bra{\psi_0} - I \):
\begin{align}
    G \ket{\psi_0}
    &= D O \ket{\psi_0} \\
    &= \left( 2 \ket{\psi_0} \bra{\psi_0} - I \right)
       \left( -\sin\theta \ket{\psi_T}
       + \cos\theta \ket{\psi_\perp} \right).
\end{align}

Substitute \( \ket{\psi_0} \):
\begin{equation}
    G \ket{\psi_0}
= -\sin\theta \ket{\psi_T} + \cos\theta \ket{\psi_\perp}
\end{equation}
\begin{equation}
    + 2 \left( \sin\theta \ket{\psi_T} + \cos\theta \ket{\psi_\perp} \right)\cdot\left( -\sin^2\theta + \cos^2\theta \right).
\end{equation}

After simplification:
\begin{equation}
    G \ket{\psi_0}
= \sin(3\theta) \ket{\psi_T} + \cos(3\theta) \ket{\psi_\perp}.
\end{equation}

Hence, each application of \( G \) rotates the state by angle \( 2\theta \) in the 2D subspace.

So, after \( r \) Grover iterations:
\begin{equation}
    G^r \ket{\psi_0}
= \sin((2r+1)\theta) \ket{\psi_T} + \cos((2r+1)\theta) \ket{\psi_\perp}.
\end{equation}

Thus, the probability of observing a marked state is:
\begin{equation}
    P_{\text{success}} = \sin^2\left((2r+1)\theta\right).
\end{equation}

\begin{corollary}
    The maximum is achieved when $(2r+1)\theta \approx \frac{\pi}{2}$, so
\begin{equation}
    r^\star = \left\lfloor \frac{\pi}{4\theta} - \frac{1}{2} \right\rfloor
\end{equation}
\end{corollary}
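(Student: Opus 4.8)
The plan is to read the Corollary as a straightforward optimization of the success probability already supplied by the Theorem. From the rotation identity $G^r\ket{\psi_0} = \sin((2r+1)\theta)\ket{\psi_T} + \cos((2r+1)\theta)\ket{\psi_\perp}$, the probability of measuring a marked state is $P(r) = \sin^2\!\big((2r+1)\theta\big)$. First I would relax the iteration count $r$ to a real variable and maximize $P$ over $\mathbb{R}_{\ge 0}$. Since $\sin^2$ attains its maximal value $1$ precisely when its argument is an odd multiple of $\pi/2$, the smallest positive maximizer solves $(2r+1)\theta = \pi/2$, which gives the continuous optimum $r_c := \frac{\pi}{4\theta} - \frac{1}{2}$. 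This already exhibits the formula in the statement before any rounding.

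The second step is to reintroduce the constraint that $r$ must be a nonnegative integer. The key structural fact is that $P(r)$ is smooth and strictly increasing on $[0, r_c]$, because the argument $(2r+1)\theta$ sweeps the interval $[\theta, \pi/2] \subset [0,\pi/2]$ on which $\sin$ is increasing. Hence the best admissible integer is obtained by rounding $r_c$, and the Corollary adopts the conservative choice $r^\star = \lfloor r_c \rfloor = \big\lfloor \frac{\pi}{4\theta} - \frac{1}{2} \big\rfloor$, which keeps the rotation angle at or below $\pi/2$ and so never overshoots the target. I would then quantify the residual loss: writing $\alpha := (2r^\star+1)\theta$ and combining $(2r_c+1)\theta = \pi/2$ with $0 \le r_c - r^\star < 1$ yields $\frac{\pi}{2} - 2\theta < \alpha \le \frac{\pi}{2}$, so that $P(r^\star) = \sin^2(\alpha) > \cos^2(2\theta)$, which is close to $1$ in the regime $k \ll N$ where $\theta$ is small. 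This both certifies $r^\star$ as essentially optimal and, via $\theta \approx \sqrt{k/N}$, recovers the familiar $O(\sqrt{N/k})$ query count.

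The hard part is conceptual rather than computational: the continuous optimum $r_c$ is generically non-integral, so $P = 1$ is not exactly attainable, and the genuine content of the statement is the rounding argument rather than the algebraic derivation of the formula. The delicate point is justifying the floor (as opposed to the nearest-integer or ceiling choice) and certifying that the induced angle does not land on the far side of $\pi/2$, where $P$ would begin to decrease. I expect to settle this entirely through the explicit interval bound $\frac{\pi}{2} - 2\theta < \alpha \le \frac{\pi}{2}$ above: the unimodality of $\sin^2$ on $[0,\pi/2]$ guarantees that no smaller integer does better and that $\lfloor r_c \rfloor$ locates the peak of the discrete sequence $\{P(r)\}_{r\ge 0}$ up to the controlled error $\cos^2(2\theta)$.
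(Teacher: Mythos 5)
Your proposal is correct, and it supplies substantially more than the paper does. In the paper the corollary is stated bare, nested inside the proof of the rotation theorem: the text simply observes that $P_{\text{success}} = \sin^2((2r+1)\theta)$ is maximized when $(2r+1)\theta \approx \frac{\pi}{2}$ and writes down $r^\star = \left\lfloor \frac{\pi}{4\theta} - \frac{1}{2} \right\rfloor$ with no discussion of the integer constraint at all. Your first step (solving $(2r+1)\theta = \frac{\pi}{2}$ for the continuous optimizer $r_c$) is exactly the paper's argument; everything after that — the monotonicity of $P$ on $[0, r_c]$, the interval bound $\frac{\pi}{2} - 2\theta < \alpha \le \frac{\pi}{2}$, and the resulting guarantee $P(r^\star) > \cos^2(2\theta)$ — is new content that actually justifies the floor and quantifies the loss from rounding, which the paper leaves implicit. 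One small caveat in your own framing: the floor is \emph{not} in general the exact discrete argmax; since $\sin^2$ is symmetric about $\frac{\pi}{2}$, the true maximizer of the discrete sequence is the nearest integer to $r_c$ (e.g., if the fractional part of $r_c$ exceeds $\frac{1}{2}$, the ceiling lands closer to the peak and strictly beats the floor). Your hedge that $\lfloor r_c \rfloor$ locates the peak only ``up to the controlled error'' is therefore the right claim, and it is all the corollary's ``$\approx$'' wording requires, but you should state explicitly that exact discrete optimality may fail for the floor — otherwise the sentence asserting that unimodality certifies the floor choice reads stronger than what you prove.
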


\end{proof}

\section{Grover Geometry and Amplitude Dynamics} \label{sec:grover-geometry}

We formalize the Grover iteration as a unitary rotation in a two-dimensional Hilbert subspace and derive the exact behavior of the success probability after $r$ iterations.

\subsection{Grover Operator Definition}

Let $\ket{\Psi}$ be the uniform superposition state over all $N = 4^n$ encoded maze paths:
\begin{equation}
    \ket{\Psi} = \frac{1}{\sqrt{N}} \sum_{x \in \mathcal{P}_n} \ket{x}
\end{equation}

Let $O$ be the oracle operator that flips the sign of marked (high-fitness) solutions, and let $D = 2\ket{\Psi}\bra{\Psi} - I$ be the Grover diffusion operator (reflection about $\ket{\Psi}$). The Grover iterate is:
\begin{equation}
    G = D \cdot O = (2\ket{\Psi}\bra{\Psi} - I) \cdot O
\end{equation}

\subsection{Subspace Decomposition}

Let $\mathcal{M} \subset \mathcal{P}_n$ denote the set of marked solutions with $|\mathcal{M}| = k$. Define two orthonormal vectors:

\begin{align}
\ket{\Psi_T} &= \frac{1}{\sqrt{k}} \sum_{x \in \mathcal{M}} \ket{x} \quad \text{(target subspace)} \\
\ket{\Psi_\perp} &= \frac{1}{\sqrt{N - k}} \sum_{x \notin \mathcal{M}} \ket{x} \quad \text{(unmarked subspace)}
\end{align}

Then the initial state decomposes as:
\begin{align}
    \ket{\Psi} = \sqrt{\frac{k}{N}} \ket{\Psi_T} + \sqrt{1 - \frac{k}{N}} \ket{\Psi_\perp} &= \sin(\theta)\ket{\Psi_T}\\&+ \cos(\theta)\ket{\Psi_\perp}
\end{align}
where $\theta = \arcsin\left( \sqrt{\frac{k}{N}} \right)$ is the initial angle between $\ket{\Psi}$ and the unmarked subspace.

\subsection{Grover Rotation as Reflection Composition}

The oracle $O$ reflects about $\ket{\Psi_\perp}$:
\begin{equation}
    O \ket{x} =
\begin{cases}
- \ket{x} & \text{if } x \in \mathcal{M} \\
\phantom{-} \ket{x} & \text{otherwise}
\end{cases}
\end{equation}
\[
    \Rightarrow \quad O = I - 2\Pi_T, \text{ where } \Pi_T = \ket{\Psi_T}\bra{\Psi_T}
\]

The diffuser $D$ reflects about $\ket{\Psi}$:
\begin{equation}
    D = 2\ket{\Psi}\bra{\Psi} - I
\end{equation}

So $G = D \cdot O$ performs a rotation by angle $2\theta$ in the span $\text{span}(\ket{\Psi_T}, \ket{\Psi_\perp})$.

\begin{proof}
Let us consider the action of $G$ on the initial state $\ket{\Psi}$. Write:
\begin{equation}
    \ket{\Psi} = \cos(\theta) \ket{\Psi_\perp} + \sin(\theta) \ket{\Psi_T}
\end{equation}

Apply $O$:
\begin{equation}
    O \ket{\Psi} = \cos(\theta) \ket{\Psi_\perp} - \sin(\theta) \ket{\Psi_T}
\end{equation}

Apply $D$:
\begin{equation}
    G \ket{\Psi} = D (O \ket{\Psi}) = (2\ket{\Psi}\bra{\Psi} - I)(\cos\theta \ket{\Psi_\perp} - \sin\theta \ket{\Psi_T})
\end{equation}

Since $\ket{\Psi}$ lies in the plane spanned by $\ket{\Psi_T}, \ket{\Psi_\perp}$, the operator $G$ acts as a planar rotation by angle $2\theta$ counterclockwise. That is:
\begin{equation}
    G \ket{\Psi} = \cos(3\theta) \ket{\Psi_\perp} + \sin(3\theta) \ket{\Psi_T}
\end{equation}

By induction, after $r$ applications:
\begin{equation}
    G^r \ket{\Psi} = \cos((2r+1)\theta) \ket{\Psi_\perp} + \sin((2r+1)\theta) \ket{\Psi_T}
\end{equation}
\end{proof}

\subsection{Success Probability}

Let $\mathcal{M}$ be the set of marked states. The probability of measuring a marked state after $r$ Grover iterations is:
\begin{equation}
    P_{\text{success}}(r) = |\langle \Psi_T | G^r \ket{\Psi}|^2 = \sin^2((2r+1)\theta)
\end{equation}

\subsection{Optimal Iteration Count}

To maximize $P_{\text{success}}$, we choose $r$ to be the closest integer to:
\begin{equation}
    r^\star = \left\lfloor \frac{\pi}{4\theta} - \frac{1}{2} \right\rfloor
\quad \text{where} \quad \theta = \arcsin\left(\sqrt{\frac{k}{N}} \right)
\end{equation}

This achieves $P_{\text{success}} \approx 1$ when $(2r+1)\theta \approx \frac{\pi}{2}$.

\subsection{Spectral Decomposition of $G$}

Let $G$ act on the two-dimensional invariant subspace $\mathcal{H}_2 = \text{span}(\ket{\Psi_T}, \ket{\Psi_\perp})$. Its action can be fully described by:

\begin{equation}
    G|_{\mathcal{H}_2} =
\begin{bmatrix}
\cos(2\theta) & -\sin(2\theta) \\
\sin(2\theta) & \cos(2\theta)
\end{bmatrix}
\end{equation}

with eigenvalues $e^{\pm i2\theta}$ and corresponding eigenvectors. The repeated application of $G$ rotates the state vector smoothly toward $\ket{\Psi_T}$, increasing overlap with marked solutions.

\section{Monotonic Cutoff Update Guarantees Convergence} \label{sec:cutoff-proof}

Let the algorithm run for $T$ iterations. In each iteration $t \in \{1, \dots, T\}$, the algorithm observes a fitness value $f^\ast_t$ and updates a classical threshold or \emph{cutoff} $C_t$.

We define the cutoff update rule as:
\begin{definition}[Monotonic Cutoff Policy]
Let $C_1 \in \mathbb{N}$ be an initial cutoff. Then define:
\begin{equation}
    C_{t+1} := \max(C_t, f^\ast_t), \quad \text{for } t = 1, \dots, T-1
\end{equation}
\end{definition}

Let $f_{\max} := \max_{x \in \mathcal{P}_n} f(x)$ be the maximum possible fitness (achieved by an optimal path).

\begin{lemma}[Cutoff Sequence Properties]
The sequence $\{C_t\}$ is non-decreasing and bounded above by $f_{\max}$.
\end{lemma}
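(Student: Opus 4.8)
The plan is to prove the two claimed properties of the cutoff sequence separately, since both follow almost immediately from the recursive definition $C_{t+1} := \max(C_t, f^\ast_t)$. First I would establish monotonicity: for each $t \in \{1,\dots,T-1\}$, the update sets $C_{t+1} = \max(C_t, f^\ast_t) \geq C_t$ by the elementary property of the maximum, so $C_t \leq C_{t+1}$ for all $t$, which is exactly the statement that $\{C_t\}$ is non-decreasing. This requires no machinery beyond the definition of $\max$.

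Second I would establish the upper bound $C_t \leq f_{\max}$ by induction on $t$. For the base case, I would note that the initial cutoff $C_1$ is chosen so that $C_1 \leq f_{\max}$ — this is the one point that deserves care, because the definition only states $C_1 \in \mathbb{N}$ and does not a priori guarantee $C_1 \leq f_{\max}$. I would either invoke the intended assumption that the initial cutoff is set below the maximum achievable fitness (consistent with its role as a threshold that marked states must exceed), or note that in the degenerate case $C_1 > f_{\max}$ no state is ever marked and the bound is vacuous. For the inductive step, assuming $C_t \leq f_{\max}$, I would observe that $f^\ast_t$ is an \emph{observed} fitness value, hence $f^\ast_t = f(x)$ for some measured path $x \in \mathcal{P}_n$, and therefore $f^\ast_t \leq \max_{x \in \mathcal{P}_n} f(x) = f_{\max}$ by definition of $f_{\max}$. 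Combining these two inequalities gives $C_{t+1} = \max(C_t, f^\ast_t) \leq \max(f_{\max}, f_{\max}) = f_{\max}$, completing the induction.

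Taken together, the two parts show that $\{C_t\}$ is a non-decreasing sequence contained in the interval $[C_1, f_{\max}]$, which is precisely the lemma. The argument is essentially a one-line application of the definition plus a routine induction.

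The main obstacle is not computational but definitional: the lemma as stated relies on $f^\ast_t$ genuinely being a fitness value of some realizable path (so that $f^\ast_t \leq f_{\max}$) and on $C_1$ being a sensible initial threshold. The key step to get right is therefore the \emph{justification} that every observed value $f^\ast_t$ is bounded by $f_{\max}$ — this hinges on the fact that measurement of the fitness register returns $f(x)$ for an actual path $x \in \mathcal{P}_n$, and on the fitness construction of Section~\ref{sec:fitness} guaranteeing $f(x) \in [0, f_{\max}]$ for all such $x$. Once this semantic point is pinned down, the inequalities are immediate.
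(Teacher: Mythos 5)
Your proposal is correct and follows the same two-part structure as the paper's proof (monotonicity from the definition of $\max$, boundedness from $f^\ast_t \leq f_{\max}$), but your version is actually tighter than the paper's. The paper proves boundedness with the single chain $C_{t+1} = \max(C_t, f^\ast_t) \leq \max(C_t, f_{\max}) \leq f_{\max}$, and the last inequality there is valid only if one already knows $C_t \leq f_{\max}$ --- i.e., the paper's argument is an induction whose hypothesis is used silently and whose base case is never stated. You make both of these explicit: you structure the bound as an induction on $t$, and you flag that the definition only says $C_1 \in \mathbb{N}$, so $C_1 \leq f_{\max}$ must either be assumed (the intended reading) or handled as a degenerate case in which no state is ever marked. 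You also justify $f^\ast_t \leq f_{\max}$ semantically, via the fact that an observed fitness is $f(x)$ for some measured path $x \in \mathcal{P}_n$, which the paper simply asserts. So the mathematical content is the same, but your write-up closes two small gaps the paper leaves open; the only thing the paper has that you omit is the concluding remark that an integer-valued, non-decreasing, bounded sequence converges, which is used by the subsequent finite-time convergence theorem rather than by the lemma statement itself.
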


\begin{proof}
We show two properties:

\textbf{1. Monotonicity:}
\begin{equation}
    C_{t+1} = \max(C_t, f^\ast_t) \geq C_t
\Rightarrow C_1 \leq C_2 \leq \dots \leq C_T
\end{equation}

\textbf{2. Boundedness:} Since the observed fitness satisfies $f^\ast_t \leq f_{\max}$ for all $t$,
\begin{align}
    C_{t+1} &= \max(C_t, f^\ast_t) \leq \max(C_t, f_{\max}) \leq f_{\max}\\&\Rightarrow C_t \leq f_{\max} \quad \forall t
\end{align}

Therefore, $\{C_t\}$ is a monotonic non-decreasing sequence bounded above by $f_{\max}$. Since it is integer-valued, it must converge. \qed
\end{proof}

\begin{theorem}[Finite-Time Convergence]
The cutoff sequence $\{C_t\}$ converges to $f_{\max}$ in at most $f_{\max} - C_1$ steps.
\end{theorem}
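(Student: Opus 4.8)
The plan is to combine the structural facts already established in the Cutoff Sequence Properties lemma with a simple integer-counting argument, while isolating the one additional ingredient the lemma does not by itself supply. First I would recall that $\{C_t\}$ is non-decreasing, integer-valued, and bounded above by $f_{\max}$. Because the sequence takes values in $\mathbb{N}$, any step on which it changes must increase it by at least one; equivalently, $C_{t+1} - C_t \in \{0\} \cup \mathbb{N}_{\geq 1}$. This immediately bounds the number of \emph{strict} increases, since the telescoping sum $\sum_t (C_{t+1} - C_t) = C_T - C_1$ cannot exceed $f_{\max} - C_1$.

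The substantive step is to show that the limit is $f_{\max}$ rather than some smaller plateau, and that this limit is attained within $f_{\max} - C_1$ iterations. I would argue that whenever $C_t < f_{\max}$ the marked set $M = \{x : f(x) > C_t\}$ is nonempty, because the optimal path $P^\star$ satisfies $f(P^\star) = f_{\max} > C_t$. Consequently the Grover angle $\theta = \arcsin(\sqrt{k/N})$ is strictly positive, and the amplitude-amplification analysis of Section~\ref{sec:grover-geometry} guarantees that a measurement returns, with the stated success probability, a state $x$ with $f(x) > C_t$. Setting $f^\ast_t = f(x)$ then yields $C_{t+1} = \max(C_t, f^\ast_t) > C_t$, i.e. a strict increase of at least one integer unit on every iteration for which the cutoff has not yet reached $f_{\max}$.

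Chaining these observations gives the bound directly: starting from $C_1$, each iteration with $C_t < f_{\max}$ raises the cutoff by at least $1$, so after at most $f_{\max} - C_1$ iterations the cutoff can no longer strictly increase, forcing $C_t = f_{\max}$. Once $C_t = f_{\max}$, the update $C_{t+1} = \max(f_{\max}, f^\ast_t) = f_{\max}$ holds trivially by the boundedness half of the lemma, so the value is a fixed point and the sequence has converged.

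The main obstacle I anticipate is precisely the strict-progress claim in the second paragraph: the monotonicity lemma guarantees only convergence to \emph{some} limit $\leq f_{\max}$, and upgrading this to convergence to $f_{\max}$ requires the probabilistic guarantee that each Grover round actually observes a fitness value exceeding the current cutoff. Making this rigorous means either assuming each round is run (or repeated) long enough that a marked state is observed with certainty, or restating the theorem as a high-probability or expected-time bound. I would therefore flag explicitly the hypothesis that each iteration returns a strictly improving observation whenever $C_t < f_{\max}$; under that hypothesis the $f_{\max} - C_1$ bound follows, and without it only the weaker ``converges to some limit'' conclusion of the lemma is available.
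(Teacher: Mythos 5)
Your proof is, at its core, the same integer-counting argument the paper gives: each strictly improving observation raises the integer-valued cutoff by at least one, and since the cutoff is sandwiched between $C_1$ and $f_{\max}$, at most $f_{\max}-C_1$ strict increases can occur. Where you differ is in supplying --- and then honestly qualifying --- the one ingredient that turns ``at most $f_{\max}-C_1$ increases'' into ``reaches $f_{\max}$ within $f_{\max}-C_1$ steps'': the claim that whenever $C_t < f_{\max}$ a strictly improving observation actually arrives. The paper's proof skips this entirely; it notes that increments are bounded in number and then simply asserts that the sequence ``stabilizes at $f_{\max}$,'' which does not follow from monotonicity and boundedness alone, since the sequence could plateau strictly below $f_{\max}$ if no improving sample is ever drawn. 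Your second paragraph supplies the missing argument (nonempty marked set $\Rightarrow$ positive Grover angle $\Rightarrow$ improving sample with the stated success probability), and your final paragraph correctly observes that this guarantee is only probabilistic, so the theorem is valid either under the explicit hypothesis that every round with $C_t < f_{\max}$ returns an improving observation, or when restated as a high-probability bound --- which is in fact how the paper itself re-derives the result in the Cutoff Convergence theorem of Section~\ref{sec:convergence-theorem}. In short: same counting skeleton, but your version is more rigorous than the paper's own proof, and the gap you flag is genuinely present in the paper's argument rather than in yours.
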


\begin{proof}
Each time an observation $f^\ast_t > C_t$ occurs, the cutoff strictly increases:
\begin{equation}
    C_{t+1} = f^\ast_t > C_t
\Rightarrow C_{t+1} > C_t
\end{equation}

Since $f^\ast_t$ is bounded above by $f_{\max}$ and cutoff values are integers, the number of possible increments is at most $f_{\max} - C_1$. Thus, the sequence stabilizes at $f_{\max}$ in at most $f_{\max} - C_1$ iterations.
\end{proof}

\begin{corollary}[Persistence of Maximum Amplification]
Once $C_t = f_{\max}$, all optimal states $x^\star$ with $f(x^\star) = f_{\max}$ will be marked by the oracle in all subsequent iterations:
\[
f(x^\star) > C_t = f_{\max} \Rightarrow \text{False}
\]
but
\[
f(x^\star) = C_t = f_{\max} \Rightarrow \text{phase flip occurs if } f(x) \geq C_t
\]

Therefore, the oracle permanently marks all global optima once cutoff reaches $f_{\max}$, enabling Grover amplification to amplify them indefinitely. \qed
\end{corollary}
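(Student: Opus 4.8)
The plan is to combine the two convergence results already established with a careful treatment of the oracle's marking predicate at the boundary value $C_t = f_{\max}$. First I would invoke the Lemma on Cutoff Sequence Properties together with the Finite-Time Convergence Theorem: these guarantee that $\{C_t\}$ is non-decreasing, bounded above by $f_{\max}$, and reaches $f_{\max}$ after finitely many steps. Monotonicity and the upper bound together then force $C_t = f_{\max}$ for every subsequent iteration, since once an integer sequence attains its supremum it can neither increase nor decrease. This pins down the cutoff for the remainder of the run, and is the easy half of the argument.

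Next I would translate the fixed cutoff into a statement about the marked set. Define $M_t := \{ x \in \mathcal{P}_n : \text{the oracle marks } x \text{ at cutoff } C_t \}$. The goal is to show that for every $t$ with $C_t = f_{\max}$ we have $M_t = \{ x^\star : f(x^\star) = f_{\max} \}$, a fixed nonempty set independent of $t$. Non-emptiness holds because $f_{\max}$ is attained by at least one optimal path (it is a maximum over the finite set $\mathcal{P}_n$), and constancy follows at once once the marking predicate is shown to depend only on $C_t$, which is now constant.

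The main obstacle, and the step I would treat most carefully, is the inequality at the boundary. The oracle of Section~\ref{sec:oracle} marks a state precisely when $f(x) > \text{cutoff}$, using a strict greater-than comparator. But at $C_t = f_{\max}$ no state satisfies $f(x) > f_{\max}$, so a literal reading of the strict oracle yields $M_t = \emptyset$ and amplifies nothing, which is exactly the degenerate case the corollary's own displayed lines are flagging. To make the statement true I would adopt the non-strict convention already signalled in the corollary (``phase flip occurs if $f(x) \geq C_t$''): either replace the final-stage comparator by a greater-than-or-equal test, or equivalently run the monotonic update against an effective threshold of $C_t - 1$ so that the strict comparator marks exactly the optima. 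Under either reconciliation one obtains $M_t = \{ x : f(x) \geq f_{\max} \} = \{ x^\star : f(x^\star) = f_{\max} \}$, so the set of global optima is marked in every iteration with $C_t = f_{\max}$.

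Finally I would feed this back into the Grover dynamics. Since $M_t$ is constant, the count $k = |M_t|$ and hence the rotation angle $\theta = \arcsin\sqrt{k/N}$ of the Grover rotation theorem are identical across all subsequent iterations, so each run reflects about precisely the optima subspace and amplifies it in the same way. I would read ``amplify them indefinitely'' as the persistence of the marked set across iterations, not as a claim that one uninterrupted rotation keeps the success probability at one; the latter would contradict the periodicity $P_{\text{success}}(r) = \sin^2((2r+1)\theta)$. With that reading, the persistence of $M_t$ is exactly what guarantees that every subsequent amplification round is reliably aimed at the global optima.
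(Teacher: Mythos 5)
Your proposal is correct and follows essentially the same route as the paper: the paper's own ``proof'' of this corollary is little more than the two displayed implications, whose entire content is the switch from the strict comparison $f(x) > C_t$ (which marks nothing once $C_t = f_{\max}$) to the non-strict convention $f(x) \geq C_t$, and that is exactly the reconciliation you adopt. If anything, your treatment is more careful than the paper's: you explicitly flag that the Section~\ref{sec:oracle} oracle is defined with a strict greater-than comparator (so the corollary silently changes the oracle's predicate rather than deriving anything from it), you prove that $C_t$ remains pinned at $f_{\max}$ and hence that the marked set is a fixed nonempty set of global optima, and you correctly qualify ``amplify them indefinitely'' as persistence of the marked set rather than a claim that $P_{\text{success}}$ stays at one, which would contradict the periodicity of $\sin^2((2r+1)\theta)$.
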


\section{Convergence of Adaptive Cutoff Strategy} \label{sec:convergence-theorem}

We now prove that the adaptive cutoff strategy guarantees convergence to maximum-fitness states with high probability. We note that the update scheme requires only fitness observations and classical maximum tracking. This is simpler than full Grover iterations with known $k$ and allows the algorithm to dynamically reduce the oracle’s acceptance set until only the global optimum remains.

Let the quantum search be run over a Grover operator $G_t$ with an oracle defined by a cutoff value $C_t$ at each step. After each round, we update the cutoff as:
\begin{equation}
    C_{t+1} := \max(C_t, f^\ast_t),
\end{equation}
where $f^\ast_t$ is the observed fitness in the $t$-th iteration.

Assume fitness values are integers bounded above by $f_{\max} \leq 2m$.

\vspace{1em}
\begin{theorem}[Cutoff Convergence]
Let fitness values be bounded and the cutoff update rule be monotonic. Then with high probability, the algorithm halts with a solution $x^\star$ such that $f(x^\star) = f_{\max}$ after at most $T \leq 2m$ steps with success probability at least:
\begin{equation}
    P_{\text{success}} \geq 1 - \varepsilon.
\end{equation}
\end{theorem}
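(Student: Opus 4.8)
The plan is to prove the final Theorem (Cutoff Convergence) by combining the deterministic convergence of the cutoff sequence, already established in Section~\ref{sec:cutoff-proof}, with a per-round probabilistic guarantee coming from Grover amplification. The statement has two parts that I would attack separately: the bound $T \le 2m$ on the number of steps, and the success probability $P_{\text{success}} \ge 1-\varepsilon$. First I would invoke the Finite-Time Convergence Theorem together with the hypothesis $f_{\max} \le 2m$: since the cutoff strictly increases by at least one integer unit on each productive round and is bounded above by $f_{\max}$, the number of strict increments is at most $f_{\max} - C_1 \le 2m$, which gives the step bound directly. The harder content is in translating "the cutoff eventually reaches $f_{\max}$" into "we observe a state with $f(x^\star)=f_{\max}$ with high probability."

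Second, for the probabilistic part I would set up a union-bound argument over rounds. In each round $t$, the Grover operator $G_t$ with cutoff $C_t$ amplifies the marked set $M_t = \{x : f(x) > C_t\}$ (or $\ge C_t$, matching the oracle convention). By the rotation analysis of Section~\ref{sec:grover-geometry}, running $r_t^\star \approx \lfloor \pi/(4\theta_t) - 1/2\rfloor$ iterations yields a measurement in $M_t$ with probability $\sin^2((2r_t^\star+1)\theta_t)$, which is at least $1 - \tfrac{k_t}{N}$ (a standard Grover bound) whenever $k_t \ge 1$. I would let $\delta_t$ denote the failure probability of round $t$ (the event that measurement returns an unmarked state and the cutoff fails to advance when it could), and show $\delta_t$ is controllable. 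Then, choosing the per-round iteration counts so that each $\delta_t \le \varepsilon / (2m)$, a union bound over the at most $2m$ rounds gives total failure probability at most $\varepsilon$, hence $P_{\text{success}} \ge 1-\varepsilon$.

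The main obstacle I anticipate is the coupling between the probabilistic measurement outcome and the cutoff dynamics: the set of marked states $M_t$, the angle $\theta_t = \arcsin\sqrt{k_t/N}$, and therefore the optimal iteration count $r_t^\star$ all depend on the random trajectory of observed fitness values $f^\ast_1,\dots,f^\ast_{t-1}$. This means the rounds are not independent, so I cannot simply multiply per-round success probabilities; I would instead condition on the history and bound the conditional failure probability of each round uniformly, which legitimizes the union bound. A secondary subtlety is the behavior of $\theta_t$ as the marked set shrinks toward the single global optimum: when $k_t = 1$ we have $\theta_t = \arcsin\sqrt{1/N}$, so a full $\Theta(\sqrt{N})$ iterations are needed in the final rounds, and I would need to make explicit that the "number of rounds" $T \le 2m$ counts cutoff updates, while the quantum query cost per round can be as large as $O(\sqrt{N})$; conflating these would be an error. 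I would close by noting that once $C_t = f_{\max}$, the Persistence Corollary guarantees the optimal states remain marked, so the final successful observation certifies $f(x^\star) = f_{\max}$ and the algorithm halts correctly.
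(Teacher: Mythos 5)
Your proposal follows essentially the same route as the paper's proof: deterministic convergence of the monotone, integer-valued cutoff to $f_{\max}$ in at most $f_{\max}-C_1 \le 2m$ rounds, a per-round Grover success probability $\sin^2((2r_t+1)\theta_t)$ driven close to $1$ by choosing $r_t$ near optimal, and a failure budget split as $\delta_t = \varepsilon/T$ (your $\varepsilon/(2m)$) summed over rounds. If anything, your treatment is more careful than the paper's on two points it glosses over --- the dependence of $M_t$, $\theta_t$, and $r_t^\star$ on the random history of observations (which you handle by conditioning and bounding conditional failure probabilities uniformly), and the distinction between the number of cutoff-update rounds ($\le 2m$) and the per-round quantum query cost, which can reach $\Theta(\sqrt{N})$ when $k_t=1$.
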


\begin{proof}
We combine the monotonic convergence of $C_t$ with the probabilistic guarantee of Grover amplification.

As proven earlier, the cutoff sequence satisfies:
\begin{equation}
    C_{t+1} = \max(C_t, f^\ast_t), \quad f^\ast_t \leq f_{\max}.
\end{equation}
If $C_1 = 0$, then since each update increases $C_t$ by at least 1 (when new maxima are found), we converge to $C_T = f_{\max}$ after at most $2m$ steps.

At each step $t$, Grover’s search is run for $r_t$ iterations with oracle threshold $C_t$. Let $\theta_t$ be the angle defined by:
\begin{equation}
    \sin^2 \theta_t = \frac{k_t}{N}, \quad \text{where} \quad k_t = |\{x : f(x) > C_t\}|, \quad N = 4^n.
\end{equation}

By Grover geometry, the success probability after $r_t$ rounds is:
\begin{equation}
    P_t = \sin^2((2r_t + 1)\theta_t).
\end{equation}

If we choose $r_t$ such that $(2r_t + 1)\theta_t \approx \frac{\pi}{2}$, then $P_t \geq 1 - \delta$, where $\delta$ is small.

Let $\varepsilon$ be the total failure budget. Then we require:
\begin{equation}
    \sum_{t=1}^T \delta_t \leq \varepsilon
\end{equation}
Set $\delta_t = \varepsilon / T$ in each round. Then with high probability $1 - \varepsilon$, at least one round will succeed in returning an $x^\star$ such that $f(x^\star) \geq C_t$.

Once $C_t = f_{\max}$, the oracle marks only optimal solutions. By running Grover again, with $k = 1$ marked state, the final amplification step returns the true optimum with probability $\geq 1 - \delta$.

Thus, the adaptive procedure converges to $f_{\max}$ within $T \leq 2m$ rounds and final success probability at least $1 - \varepsilon$.
\end{proof}

\section{Conclusion} \label{sec:conclusion}

This paper presented a complete quantum algorithmic framework for solving perfect maze problems using Grover’s amplitude amplification. Starting from a formal specification of the maze structure and its representation in a binary quantum state space, we developed a full quantum pipeline that encodes all potential paths in superposition, evaluates their quality using a reversible fitness operator, and amplifies desirable paths through quantum search.

The fitness operator was implemented using reversible quantum arithmetic to simulate path traversal and compute a distance-based score, ensuring compliance with unitarity constraints. A corresponding oracle was constructed to apply a conditional phase shift to high-fitness states. The design of both operators was rigorously validated with formal proofs of correctness, unitarity, and reversibility. To improve search efficiency and algorithmic convergence, we introduced an adaptive cutoff strategy that incrementally refines the threshold used by the oracle. This strategy guarantees monotonic improvement in observed fitness values and is proven to converge to an optimal solution within a bounded number of iterations. We further provided a detailed circuit-level resource analysis, showing that the quantum circuit depth and qubit count scale efficiently with path length and maze dimensions, based on established reversible logic techniques. 

While the framework is logically complete and mathematically sound, it operates under idealized assumptions, such as noiseless quantum gates and perfect measurements. Future work may focus on extending the algorithm to noisy intermediate-scale quantum (NISQ) devices through error mitigation or fault-tolerant techniques, including surface codes. Additionally, the framework can be generalized to more complex environments, such as mazes with cycles, dynamically evolving topologies, or weighted traversal costs. Integration with quantum walk models or variational oracle constructions may also provide further performance improvements. Finally, hybrid classical-quantum preprocessing could be explored to reduce the effective search space prior to quantum execution, enhancing practicality on near-term hardware.

\newpage

\section*{Appendix}

\section{Example 2×2 Maze with Path Length $n=2$}

Consider a $2 \times 2$ perfect maze with starting point at $(0,0)$ and goal at $(1,1)$. Each direction is encoded as:

\[
\ket{N} = \ket{00}, \quad \ket{E} = \ket{01}, \quad \ket{S} = \ket{10}, \quad \ket{W} = \ket{11}
\]

With $n=2$ moves, a path is a $2n = 4$-qubit register. Total number of candidate paths: $4^2 = 16$. Example paths and their interpretations:

\begin{center}
\begin{tabular}{ll}
Bitstring & Directions \\
\hline
$\ket{0000}$ & $N, N$ \\
$\ket{0101}$ & $E, E$ \\
$\ket{0110}$ & $E, S$ \\
$\ket{1011}$ & $S, W$ \\
$\ket{1001}$ & $S, E$ (valid) \\
\end{tabular}
\end{center}

Assume only one valid path reaches $(1,1)$: $S \rightarrow E$. This path would have highest fitness:
\[
\text{fit} = 2m - \left[ (1 - 1)^2 + (1 - 1)^2 \right] = 4
\]

Other paths end at different coordinates and get lower fitness scores (e.g., 1 or 2). The oracle marks paths with fitness $> \text{cutoff}$. Grover amplifies those.

\section{Quantum Circuit Sketch for Fitness Operator $F$}

\subsection*{Inputs}
\begin{itemize}
    \item $2n$-qubit path register
    \item $(\log m)$-qubit fitness output register (zero-initialized)
    \item $(\log m)$-qubit position registers $(i,j)$, initialized to start
    \item Ancilla registers for intermediate logic
\end{itemize}

\subsection*{High-Level Circuit Stages}

\begin{enumerate}
    \item \textbf{Path Decoder:}  
    For each direction block (2 qubits), apply logic:
    \begin{itemize}
        \item $00$ (N): subtract 1 from $i$
        \item $01$ (E): add 1 to $j$
        \item $10$ (S): add 1 to $i$
        \item $11$ (W): subtract 1 from $j$
    \end{itemize}
    Implemented with controlled add/sub gates on $(i,j)$ registers.

    \item \textbf{Distance Calculator:}  
    Apply reversible subtract: $(i_f - i)^2 + (j_f - j)^2$.  
    Uses square-and-add circuit from quantum arithmetic.

    \item \textbf{Fitness Subtraction:}  
    Compute $2m - \text{distance}$ via controlled subtract circuit. Result stored in fitness register.

    \item \textbf{Uncompute Temporary Registers:}  
    Clean up ancillas to restore reversibility.
\end{enumerate}

All steps use known quantum arithmetic primitives (e.g., Draper adders, Cuccaro adder for binary subtraction, Toffoli-based squarers).

\section{Oracle Comparator Circuit}

We describe a phase oracle that flips the sign of states where $\text{fit}(x) > \text{cutoff}$.

\subsection*{Registers}

\begin{itemize}
    \item Fitness register: $\ket{f_x}$
    \item Classical threshold: $\text{cutoff}$ (hardcoded or encoded in quantum register)
    \item Ancilla: 1-qubit flag $\ket{b}$
\end{itemize}

\subsection*{Construction}

Comparator can be built from bitwise subtraction circuit, sign bit analysis, and multi-controlled NOT gates with ancilla carry bits.

\begin{enumerate}
    \item Apply a greater-than comparator circuit:  
    \[
    \texttt{Compare}(\ket{f_x}, \text{cutoff}) \rightarrow \ket{b}
    \]
    This sets $\ket{b} = 1$ iff $f_x > \text{cutoff}$.

    \item Apply controlled-$Z$ to full state, conditioned on $\ket{b}$:
    \[
    CZ(\ket{b}) \rightarrow (-1)^{b} \ket{x}\ket{f_x}
    \]

    \item Uncompute comparator: run the circuit in reverse to reset $\ket{b}$.
\end{enumerate}

\section{Oracle Unitarity and Reversibility} \label{sec:oracle-proof}

Let $f_x \in \mathbb{N}$ be the fitness value encoded in the $\ket{f_x}$ register, and let $\text{cutoff} \in \mathbb{N}$ be a fixed classical threshold. We define the oracle operator $O$ as:
\begin{align*}
    O \ket{x} \ket{f_x} &= (-1)^{g(f_x)} \ket{x} \ket{f_x},\\g(f_x) &= \begin{cases}
1 & \text{if } f_x > \text{cutoff} \\
0 & \text{otherwise}
\end{cases}
\end{align*}

\subsection*{Goal}
We wish to show that $O$ is unitary:
\[
O^\dagger O = I
\]

\subsection*{Construction}

Define $g: \{0, \dots, 2m\} \to \{0,1\}$ as a Boolean function, and let $Z$ be the single-qubit Pauli-Z gate.

We build $O$ in three steps:
\begin{enumerate}
  \item \textbf{Comparison Subcircuit $C$:}
  Reversible circuit that computes:
  \[
  C: \ket{f_x} \ket{0} \mapsto \ket{f_x} \ket{g(f_x)}
  \]
  using a reversible comparator (e.g., cuGT).
  
  \item \textbf{Controlled Phase Flip:}
  Apply $Z$ conditioned on ancilla $\ket{g(f_x)} = 1$:
  \[
  Z_b \ket{g(f_x)} = (-1)^{g(f_x)} \ket{g(f_x)}
  \]

  \item \textbf{Uncompute:}
  Apply $C^{-1}$ to remove the ancilla:
  \[
  C^{-1}: \ket{f_x} \ket{g(f_x)} \mapsto \ket{f_x} \ket{0}
  \]
\end{enumerate}

The full unitary $O$ acts on $\ket{x} \ket{f_x}$ and an ancilla qubit $\ket{0}_b$, and performs:

\[
O = C^{-1} \cdot (I \otimes Z_b) \cdot C
\]

\subsection*{Proof of Unitarity}

We show that $O^\dagger = O^{-1} = O$:
\begin{align*}
O^\dagger &= \left( C^{-1} \cdot Z_b \cdot C \right)^\dagger \\
          &= C^\dagger \cdot Z_b^\dagger \cdot (C^{-1})^\dagger \\
          &= C \cdot Z_b \cdot C^{-1} \quad \text{(since } C^\dagger = C^{-1}, Z^\dagger = Z\text{)} \\
          &= O^{-1}
\end{align*}

Hence:
\[
O^\dagger O = I \quad \Rightarrow \quad O \text{ is unitary} \quad \qed
\]

\subsection*{Reversibility}

Each component is reversible:
- Comparator $C$ is constructed from reversible gates (Toffoli, Fredkin)
- $Z$ is unitary and self-inverse
- Uncomputation undoes $C$

Therefore, $O$ can be compiled into a quantum circuit using Toffoli, CNOT, and single-qubit gates. \qed

\section{Validity Check Operator and Correctness} \label{sec:validity-proof}

Let a path $x \in \mathcal{P}_n$ be encoded as $n$ instructions, where each instruction $x_k \in \{0,1,2,3\}$ corresponds to directions $\{N,E,S,W\}$.

Let the maze grid be size $m \times m$, with coordinate register $(i, j)$.

We define a simulation operator $S$ such that:
\[
S: \ket{x} \mapsto \ket{x} \ket{\text{valid}(x)}
\]

where:
\[
\text{valid}(x) =
\begin{cases}
1 & \text{if no invalid move is made (out-of-bounds)} \\
0 & \text{otherwise}
\end{cases}
\]

\subsection*{Goal} Prove $S$ is reversible and computes $\text{valid}(x)$ correctly.

\subsection*{Construction}

We simulate each step as follows:

1. Initialize registers:
\[
\ket{i} = \ket{i_s}, \quad \ket{j} = \ket{j_s}, \quad \ket{v} = \ket{1}
\]

2. For $k = 1$ to $n$, repeat:
\begin{itemize}
  \item Decode $x_k$ into a direction $d_k \in \{\pm 1, 0\}$ change
  \item Update $(i,j)$:
  \[
  i \leftarrow i + \delta_i, \quad j \leftarrow j + \delta_j
  \]
  \item Compare $i, j$ to bounds $[0, m-1]$ via reversible comparators
  \item If out-of-bounds, set $\ket{v} \mapsto \ket{0}$ using a Toffoli controlled on overflow
\end{itemize}

\subsection*{Correctness Proof}

We need to show that:
\[
\ket{v} = 1 \iff \text{each } (i_k, j_k) \in [0, m-1]^2
\]

\textit{Proof:}

At each step, we:
- Compute $\delta_i$, $\delta_j$ from the step code (e.g., $N = +1$, $E = +1$)
- Use modular arithmetic via comparators:
  \[
  C_i: \ket{i_k} \mapsto \ket{i_k} \ket{\phi_k^{(i)}}, \quad
  \phi_k^{(i)} = 1 \iff 0 \leq i_k < m
  \]

Same for $j$. We define:
\[
\phi_k = \phi_k^{(i)} \land \phi_k^{(j)}
\]

and update:
\[
v \leftarrow v \land \phi_k
\]

Since AND is reversible using Toffoli gates, the cumulative validity flag is accurate. Once any step is invalid, $v = 0$ permanently.

\[
\Rightarrow \text{valid}(x) = 1 \iff \text{All steps in bounds} \quad \qed
\]

\subsection*{Reversibility of $S$}

- Each arithmetic update uses reversible addition/subtraction
- Each comparison uses reversible logic (e.g., $\ket{a > b}$ via cuGT)
- The AND logic for validity is Toffoli-based

\[
\Rightarrow S \text{ is reversible and unitary-embeddable} \quad \qed
\]

\subsection{Resource Summary}

We summarize quantum resource requirements for the full Grover-style quantum maze solver circuit.

\begin{table}[H]
\centering
\begin{tabular}{|l|c|}
\hline
\textbf{Parameter} & \textbf{Asymptotic Cost} \\
\hline
\textbf{Path register size} ($\ket{x}$) & $2n$ qubits \\
\textbf{Position registers} ($\ket{i}, \ket{j}$) & $2 \cdot \lceil \log_2 m \rceil$ qubits \\
\textbf{Distance register} ($\ket{d}$) & $\lceil \log_2 m \rceil$ qubits \\
\textbf{Fitness register} ($\ket{f}$) & $\lceil \log_2 (2m) \rceil$ qubits \\
\textbf{Comparator ancilla} ($\ket{b}, \text{ carry bits}$) & $O(\log m)$ qubits \\
\textbf{Total ancilla (adder, comparator, cleanup)} & $O(n \log m)$ qubits \\
\hline
\textbf{Toffoli gates (path simulation)} & $O(n \log m)$ \\
\textbf{Toffoli gates (distance + fitness)} & $O(\log^2 m)$ \\
\textbf{Toffoli gates (oracle comparator)} & $O(\log m)$ \\
\textbf{Grover iteration depth (1 round)} & $O(n \log m)$ \\
\hline
\end{tabular}
\caption{Circuit resource cost as a function of path length $n$ and maze size $m$.}
\end{table}

\subsection{Resource Notes}
\begin{itemize}
    \item All arithmetic (e.g., $i - i_f$, squaring, $2m - d$) uses Cuccaro-style ripple-carry adders and CNOT-based multipliers.
    \item Comparator cost scales with $m_f = \lceil \log_2 (2m) \rceil$.
    \item Ancilla bits for temporary subtraction/multiplication are reused and uncomputed.
\end{itemize}

\subsection{Example Execution: 2-Step Path in a $2 \times 2$ Maze}

Let:
\[
n = 2, \quad m = 2, \quad (i_s, j_s) = (0,0), \quad (i_f, j_f) = (1,1)
\]
Pick path $\ket{x} = \ket{1001}$, corresponding to $(\texttt{S}, \texttt{E})$.

\begin{enumerate}
    \item \textbf{Encoding:}
    \[
    \ket{x} = \ket{1001}, \quad \texttt{S} = 10, \texttt{E} = 01
    \]

    \item \textbf{Simulation:}
    \[
    (0,0) \xrightarrow{\texttt{S}} (1,0) \xrightarrow{\texttt{E}} (1,1)
    \Rightarrow P_{\text{end}} = (1,1)
    \]

    \item \textbf{Distance Calculation:}
    \[
    d = (i - i_f)^2 + (j - j_f)^2 = (1 - 1)^2 + (1 - 1)^2 = 0
    \]

    \item \textbf{Fitness:}
    \[
    f = 2m - d = 4 - 0 = 4 \quad \Rightarrow \ket{f} = \ket{100}
    \]

    \item \textbf{Oracle with cutoff $C = 2$:}

    Since $f = 4 > 2$, comparator sets $\ket{b} = 1$ and flips the phase:
    \[
    \ket{x}\ket{f} \mapsto -\ket{x}\ket{f}
    \]

    \item \textbf{Diffusion:}

    Apply $D = 2\ket{\Psi}\bra{\Psi} - I$ on entire $\ket{x}$ register (4 qubits).

    \item \textbf{Uncomputation:}

    Reverse all fitness logic to restore ancilla to $\ket{0}$.
\end{enumerate}

\subsection{Scaling Insights}

\begin{itemize}
    \item \textbf{Linear scaling} in path length $n$ due to sequential decoding and movement updates.
    \item \textbf{Logarithmic scaling} in maze size $m$ due to positional coordinate width.
    \item \textbf{Ancilla-efficient design:} uncomputation and Bennett-style cleanup reduce qubit overhead.
\end{itemize}

\section*{Acknowledgment}

This research was developed independently under the auspices of NEXQ Inc., which holds the intellectual property rights to the concepts presented. The author gratefully acknowledges the support and resources provided by all affiliated institutions during the development of this project. Extended thanks to the staff and participants of a STEM outreach program whose collaborative environment helped inspire aspects of this research, but no third-party contributions were used in the creation of this manuscript. Any opinions, findings, conclusions, or recommendations expressed in this work are those of the author and do not necessarily reflect the views or endorsement of any other institution or organization. Publication and distribution of this manuscript have been reviewed and approved in accordance with applicable internal clearance and release procedures.

\newpage

\nocite{*}



\bibliographystyle{IEEEtran}
\bibliography{bib/proposal}

\end{document}